\newtheorem{lemma}{Lemma}
\newtheorem{theorem}{Theorem}
\newtheorem{proposition}{Proposition}
\newtheorem{definition}{Definition}
\begin{document}

\AuthorsForCitationInfo{Arthur Carvalho and Kate Larson}

\TitleForCitationInfo{A Truth Serum for Sharing Rewards}

\title{A Truth Serum for Sharing Rewards}

\numberofauthors{2}

\author{
\alignauthor
Arthur Carvalho\\
       \affaddr{Cheriton School of Computer Science}\\
       \affaddr{University of Waterloo}\\
       \affaddr{Waterloo, Ontario, Canada}\\
       \email{a3carval@cs.uwaterloo.ca}
\alignauthor 
Kate Larson\\
       \affaddr{Cheriton School of Computer Science}\\
       \affaddr{University of Waterloo}\\
       \affaddr{Waterloo, Ontario, Canada}\\
       \email{klarson@cs.uwaterloo.ca}
}

\maketitle

\begin{abstract}
We study a problem where a group of agents has to decide how a joint reward should be shared among them. We focus on settings where the share that each agent receives depends on the \emph{subjective opinions} of its peers concerning that agent's contribution to the group. To this end, we introduce a mechanism to elicit and aggregate subjective opinions as well as for determining agents' shares. The intuition behind the proposed mechanism is that each agent who believes that the others are telling the truth has its expected share maximized to the extent that it is well-evaluated by its peers and that it is truthfully reporting its opinions. Under the assumptions that agents are Bayesian decision-makers and that the underlying population is sufficiently large, we show that our mechanism is incentive-compatible, budget-balanced, and tractable. We also present strategies to make this mechanism individually rational and fair.
\end{abstract}

\category{I.2.11}{Artificial Intelligence}{Distributed Artificial Intelligence}[Multiagent systems]
\category{\\J.4}{Social and Behavioral Sciences}{Economics}

\terms{Economics, Theory}

\keywords{Fair division, Bayesian Truth Serum, Mechanism Design}

\section{Introduction}

Understanding how agents can work together in order to achieve some common goal is a central research topic in the field of multiagent systems~\cite{mas-book}. Questions that are typically analyzed include how and which groups of agents should form~\cite{Rahwan09:Anytime}, how agents should coordinate their actions once they have agreed to work together~\cite{Grosz:collaborativeplans}, how to ensure that the group, once formed, does not disintegrate~\cite{Conitzer06:Complexity}, and how any joint rewards should be divided among the group members~\cite{Moulin:Fair_division}. It is this last question that we address in this paper.

Commonly called \emph{fair division}, the problem of dividing one or several goods among a set of agents, in a way that satisfies a suitable fairness criterion, has been studied in several literatures. In economics, the collective welfare approach is arguably the most influential application of the economic analysis to fair division. It uses the concepts of collective utility functions, in its cardinal interpretation, and social welfare orderings, in its ordinal interpretation, for deciding what makes a reasonable division~\cite{Moulin:Fair_division}. In computer science and, more specifically, artificial intelligence, the fair division problem is traditionally studied in settings where the underlying agents not only have preferences over alternative allocations of goods, but also actively participate in computing an allocation~\cite{mara-survey}.

In this work, we propose a novel game-theoretic model for sharing a joint, homogeneous reward based on the idea of \emph{subjective opinions}. In detail, we consider scenarios where a group has been formed and has accomplished a task for which it is granted a reward, which must be shared among the group members. After observing the individual contributions of the peers in accomplishing the task, each agent is asked to \emph{evaluate} the others. Agents also provide \emph{predictions} about how their peers are evaluated. Thus, we consider two kinds of subjective opinions when sharing the joint reward: \emph{evaluations} and \emph{predictions}. These opinions are elicited and aggregated by a central, trusted entity called the \emph{mechanism}, which is also responsible for sharing the reward based exclusively on the received opinions.

The share received by each agent from the proposed mechanism has two major components. The first one reflects the evaluations received by that agent. The second one is a truth-telling score used to encourage agents to truthfully report their opinions. For computing such scores, the mechanism uses the Bayesian truth serum method~\cite{serum}. The intuition behind the proposed mechanism is that each agent who believes that the others are telling the truth has its expected share maximized to the extent that it is well-evaluated and that it is also telling the truth. Under the assumptions that agents are Bayesian decision-makers and that the underlying population is sufficiently large, we show that our mechanism is incentive-compatible, budget-balanced, and tractable. We also present strategies to make this mechanism individually rational and fair.

Besides this introductory section, the rest of this paper is organized as follows. In Section 2, we describe the model, concepts used throughout the paper, and properties that we wish our mechanism to exhibit. In Section 3, we introduce our mechanism and prove that it satisfies interesting properties. In Section 4, we empirically investigate the influence of the model and mechanism's parameters on agents' shares. In Section 5, we review the literature related to our work. Finally, we conclude in Section 6.

\section{Model and Background}

A set of agents $N = \{1, \dots, n\}$, for $n \geq 3$, has accomplished a task for which it is granted a \emph{reward} $V \in \Re^+$. Every agent is assumed to want more of the reward. Therefore, we can identify an agent's share with its welfare. We are interested in settings where the share of $V$ that an agent receives depends on the \emph{subjective opinions} of its peers concerning that agent's contribution to the group.

We model the private information of an agent as $n-1$\break private signals that the agent receives from its peers. These signals are direct assessments of the peers' performance in accomplishing the joint task, and we call them \emph{truthful evaluations}. Formally, given a positive integer parameter $M$, for $1 \leq M \leq V$, the signals observed by agent $i$ are represented by the vector $\mathbf{t}_i = (t_{i}^1,\dots,t_{i}^{i-1},t_{i}^{i+1},\dots, t_{i}^n)$, where $t_i^j \in \{ 1, \dots, M \}$ represents the signal observed by agent $i$ coming from agent $j$. Thus, $\mathbf{t}_i$ is the vector with the truthful evaluations made by agent $i$ regarding the contributions of its peers in accomplishing the task. In this way, the parameter $M$ represents the top possible evaluation that an agent can give or receive, and we assume that its value is common knowledge. For each agent $j \in N$, let $\omega_j \in \Delta^M$ (unit simplex in $\Re^{M}$) be an unknown parameter representing the distribution of the truthful evaluations for agent $j$.

Based on their truthful evaluations, agents can make predictions about how their peers are evaluated. The predictions made by agent $i$ are formally represented by the vector $\mathbf{r}_i = (r_{i}^1, \dots, r_{i}^{i-1}, r_{i}^{i+1}, \dots, r_{i}^n)$, where agent $i$'s prediction about the empirical distribution of evaluations received by agent $j$ is $r_i^j = (r_{i}^{j^1}, \dots, r_{i}^{j^M}) \in \Delta^M$, \textit{i.e.}, $ 0 \leq r_i^{j^k} \leq 1$ and $\sum_{k=1}^M r_i^{j^k} = 1$. Mathematically,  $r_{i}^j$ is the expected distribution of truthful evaluations for agent $j$ given agent $i$'s truthful evaluation, $i.e.$, $r_{i}^j = \mathbb{E}[\omega_j | t_i^j]$.

To avoid a biased self-judgment, agents are neither asked to make self-evaluations nor asked to make predictions about their received evaluations. They are requested to report their subjective opinions, namely, evaluations and predictions. We make the following assumptions in our model:

\begin{enumerate}

     \item \emph{Self-interestedness.} Agents act to maximize their expected shares.

	  \item \emph{Common prior.} $\forall j \in N$, there exists a common prior distribution, $p(\omega_j)$, over $\omega_j$.

     \item \emph{Rationality.} Every agent $i$, with truthful evaluation $t_i^j$, forms a posterior by applying Bayes' rule to the common prior $p(\omega_j)$, \textit{i.e.}, $p(\omega_j | t_{i}^j)$.

      \item \emph{Stochastic relevance.} $\forall i, q, j \in N, \, p(\omega_j | t_{i}^j) = p(\omega_j|t_{q}^j)$ if and only if $t_{i}^j = t_{q}^j$.

      \item \emph{Large population.} The population of agents must be sufficiently large so that a single evaluation for an agent cannot significantly affect the empirical distribution of evaluations received by that agent.

				\item \emph{Independent signals.} The signals observed by an agent are independent of each other. Formally, given $i,j,k \in N$, and $x, y \in \{1,\dots, M\}, p(t_i^j = x \,|\, t_i^k = y) = p(t_i^j = x)$.
\end{enumerate}

The first assumption means that agents are \emph{risk neutral} \cite{mascolell-whinston-green}. The second assumption means that  agents have common prior distributions over the distributions of the truthful evaluations for their peers. The third assumption means that these priors are consistent with Bayesian updating. These first three assumptions are traditional in both game theory~\cite{osborne-rubinstein} and multiagent systems \cite{mas-book} literature, and they essentially mean that agents are Bayesian decision-makers. The fourth assumption means that different truthful evaluations imply different posterior distributions, and vice-versa. By far, the most stringent assumption is the requirement of a large population. Later in this paper, we discuss the implications of such assumption and how to circumvent it. Finally, the last assumption implies that the truthful evaluation of an agent for a peer does not influence that agent's truthful evaluation for other peer.

A consequence of self-interest is that agents may deliberately lie when reporting their evaluations and/or predictions. For example, an agent may intentionally give all other agents a low evaluation so that, in comparison, it looks good and receives a greater share of $V$.  Therefore, we distinguish between the truthful evaluations made by each agent $i \in N$, $\mathbf{t}_i$, and the evaluations that agent $i$ reports, $\mathbf{x}_i=(x_{i}^1, \dots, x_{i}^{i-1}, x_{i}^{i+1}, \dots, x_{i}^n)$. Similarly, we distinguish between the truthful predictions made by each agent $i \in N$, $\mathbf{r}_i$, and the predictions that agent $i$ reports, $\mathbf{y}_i=(y_{i}^1, \dots, y_{i}^{i-1}, y_{i}^{i+1}, \dots, y_{i}^n)$.

We define the \emph{strategy} of agent $i$, $\mathbf{s}_i = (\mathbf{x}_i, \mathbf{y}_i)$, to be its reported opinions. $S_i$ is the set of strategies available to agent $i$, and $S = S_1 \times \ldots \times S_n$. We note that the parameter $M$ fully determines the strategies available to the agents. Each vector $\mathbf{s} = (\mathbf{s}_1, \dots, \mathbf{s}_n) \in S$ is a \emph{strategy profile}. As customary, let the subscript ``$-i$" denote a vector without agent $i$'s component, \textit{e.g.}, $\mathbf{s}_{-i}=(\mathbf{s}_1, \dots, \mathbf{s}_{i-1}, \mathbf{s}_{i+1}, \dots, \mathbf{s}_n)$. If the opinions reported by agent $i$ are equal to its truthful opinions, \textit{i.e.}, $\mathbf{x}_i=\mathbf{t}_i$ and $\mathbf{y}_i=\mathbf{r}_i$, then we say that agent $i$'s strategy is \emph{truthful}.

Opinions are elicited and aggregated by a central, trusted entity called the \emph{mechanism}, which is also responsible for sharing the reward among the agents. This entity relies only on the reported opinions when determining agents' shares, and so it has no additional information. Formally:

\begin{definition}[Mechanism]
A mechanism is a sharing function, $\Gamma:S \rightarrow \Re^n$, which maps each strategy profile to a vector of shares.
\end{definition}

We denote the share of $V$ given to agent $i$, when all the reported opinions are $\mathbf{s}$, by $\Gamma_i(\mathbf{s})$. We use $\Gamma_i$ when $\mathbf{s}$ is either irrelevant or clear from the context. Throughout this paper, we use the solution concept called \emph{Bayes-Nash equilibrium}.

\begin{definition}[Bayes-Nash equilibrium] We say \newline that the strategy profile $\mathbf{s} = (\mathbf{s}_1, \dots, \mathbf{s}_n)$ is a Bayes-Nash equilibrium if for each agent $ i$, and strategy $\mathbf{s}^{\prime}_i \neq \mathbf{s}_i \in S_i$, $\mathbb{E} \left[ \Gamma_i(\mathbf{s}_i,\mathbf{s}_{-i}) | \mathbf{t}_i,\mathbf{r}_i \right] \geq$  $\mathbb{E}\left[ \Gamma_i(\mathbf{s}^{\prime}_i,\mathbf{s}_{-i}) | \mathbf{t}_i,\mathbf{r}_i\right]$.
\end{definition}

In words, for each agent $i \in N$, $\mathbf{s}_i$ is the best response, in an expected sense, that agent $i$ has to $\mathbf{s}_{-i}$ given its truthful opinions $(\mathbf{t}_i, \mathbf{r}_i)$. The expectation in taken with respect to the posterior distributions. When the inequality in Definition 2 holds strictly (with \textquotedblleft$>$\textquotedblright \space instead of \textquotedblleft$\geq$\textquotedblright), then the strategy profile $\mathbf{s}$ is called a \emph{strict Bayes-Nash equilibrium}.

\subsection{Properties}

There are several key properties we wish mechanisms to have. We introduce them in this subsection.

\begin{definition}[Fairness]
Consider a strategy profile $\mathbf{s} \in S$ in which the reported evaluation of every agent $z$ for agent $i$ is paired up with agent $z$'s reported evaluation  for agent $j$, for $i\neq j\neq z \in N$, so that $x_z^i > x_z^j $. Further, the evaluations of agent $i$ and agent $j$ for each other are paired up, so that $x_j^i > x_i^j $. Then, we say that a mechanism is fair if  $ \Gamma_i(\mathbf{s}) > \Gamma_j(\mathbf{s})$.
\end{definition}

In words, if an agent unanimously receives better evaluations than a peer, then that agent should also receive a greater share of the joint reward than its peer.

\begin{definition}[Budget Balance]
A mechanism is \newline budget-balanced if $\forall \mathbf{s} \in S, \sum_{i=1}^n \Gamma_i(\mathbf{s}) = V$.
\end{definition}

In words, a budget-balanced mechanism allocates the entire reward $V$ back to the agents. As stated, this is a strong definition because we do not put constraints on $\mathbf{s}$, \textit{e.g.}, we do not require $\mathbf{s}$ to be an equilibrium strategy profile.

\begin{definition}[Individual Rationality]
A mecha-\newline nism is individually rational if $\forall i\in N, \forall\mathbf{s} \in S, \Gamma_i(\textbf{s}) \geq 0$.
\end{definition}

This condition requires the share received by each agent to be greater than or equal to zero. In other words, all agents are weakly better off participating in the mechanism than not participating at all.

\begin{definition}[Incentive Compatibility] A mechanism is incentive-compatible if collective truth-telling is an equilibrium strategy profile. 
\end{definition}

Since we are working with Bayes-Nash equilibrium, an incentive-compatible mechanism implies that it is best, in an expected sense, for each agent to tell the truth provided that the others are also doing so.

\begin{definition}[Tractability] A mechanism is trac-table if it computes agents' shares in polynomial time.
\end{definition}

By no means do we argue that the properties defined in this section are exhaustive. However, we believe that they are among the most desirable ones in practical applications.

\subsection{The Bayesian Truth Serum Method}

Prelec~\cite{serum} proposes an incentive-compatible scoring meth-od, called the \emph{Bayesian Truth Serum} (BTS), which works on a single multiple-choice question with a finite number of alternatives. Each responder is requested to endorse the answer mostly likely to be true and to predict the empirical distribution of the endorsed answers.

Responders are evaluated by the accuracy of their predictions (how well they matched the empirical frequency) as well as how \textit{surprisingly common} their answers are. For example, an answer endorsed by 50\% of the population against a predicted frequency of 25\% is surprisingly common. The responders who endorsed that answer should receive a high score. If predictions averaged 75\%, an answer endorsed by 50\% of the population would be \textit{surprisingly uncommon} and, consequently, the responders who endorsed it would receive a lower score. The surprisingly common criterion exploits the \emph{false consensus effect} to promote truthfulness, \textit{i.e.}, the general tendency of responders to overestimate the degree of agreement that the others have with them~\cite{false-consensus}.

In our work, the BTS method is used exclusively as a tool to promote truthfulness. This method is very convenient because it does not require objective answers to score opinions, \textit{i.e.}, it is possible to work with subjective information, where an absolute truth is practically unknowable, and still be able to reward truthfulness. Questions that are considered in our work have the form: \textquotedblleft What is the evaluation deserved by agent $j$?\textquotedblright, where the possible answers are values inside the set $\{1, \dots, M\}$. For illustration purpose, consider a question asking for the evaluation deserved by agent $j$. Using the notation previously defined, let  $h(x_i^j, k)$ be a zero-one indicator function, \textit{i.e.},

\begin{displaymath}
h(x_i^j, k) = \left\{ \begin{array}{ll}
1 & \textrm{if $x_i^j = k$},\\
0 & \textrm{otherwise}.\\
\end{array} \right.
\end{displaymath}

The score returned by the BTS method to agent $i$, given its reported evaluation $x_i^j$ and prediction $y_i^j$, is calculated as follows:

\begin{equation}
\label{BTS}
\mathbb{R}(i,j) = \sum_{k = 1}^M h(x_i^{j}, k) \ln \frac{\bar{x}_k}{\bar{y}_k} + \sum_{k = 1}^M \bar{x}_k \ln  \frac{ (1-\epsilon)y_i^{j^k} + \frac{\epsilon}{M}}  {\bar{x}_k},
\end{equation}

\noindent where $\bar{x}_k$ is the average frequency of evaluation $k$, and $\bar{y}_k$ is the geometric average of the predicted frequencies of evaluation $k$:

\begin{eqnarray*}
\bar{x}_k &=& (1-\epsilon)\left(\frac{1}{n-1}  \sum_{q \neq j}  h(x_q^j, k)\right) + \frac{\epsilon}{M} \\
\bar{y}_k &=& \exp\left(\frac{1}{n-1}  \sum_{q \neq j} \ln \left((1-\epsilon)y_q^{j^k} + \frac{\epsilon}{M} \right)\right)
\end{eqnarray*}

\noindent and $\epsilon$, for $0 < \epsilon < 1$, is a recalibration coefficient to adjust predictions and averages away from $0/1$ extreme values.

The BTS method has two major components. The first one, called the \emph{information score}, evaluates the evaluation given by agent $i$ to agent $j$ according to the log-ratio of its actual-to-predicted endorsement frequencies. An evaluation scores high to the extent that it is more common than collectively predicted. The second component, called the \emph{prediction score}, is a penalty proportional to the relative entropy between the empirical distribution of evaluations for agent $j$ and agent $i$'s prediction of that distribution. For a small $\epsilon$, the best prediction score is attained when a reported prediction matches the empirical distribution of evaluations.

It is interesting to note that Equation~\ref{BTS} is slightly different from the original BTS method, which uses $\epsilon = 0$. By using a small recalibration coefficient, we can avoid problems related to values that are not well-defined, \textit{e.g.}, $\ln(0)$ and $\ln(0/0)$. Any distortion in incentives can be made arbitrarily small by making $\epsilon$ sufficiently small. Under the assumptions made in the beginning of this section, and using Equation~\ref{BTS} to compute agents' scores, the following theorems hold~\cite{serum}:

\begin{theorem}
Collective truth-telling is a strict Bayes-Nash equilibrium.
\end{theorem}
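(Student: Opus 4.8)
I would prove this by checking the defining inequality of Definition~2 head-on: fix an agent $i$, assume every $q\neq i$ reports truthfully, and show that $i$'s unique expected-payoff-maximising reply is to report $(\mathbf{t}_i,\mathbf{r}_i)$. This is Prelec's claim~\cite{serum}, so the plan is to reconstruct his argument, adapted to the recalibrated rule of Equation~\ref{BTS}. The first step is a chain of reductions. Agent $i$'s BTS payoff is $\sum_{j\neq i}\mathbb{R}(i,j)$; by the independence of the signals, the report $(x_i^j,y_i^j)$ enters only the $j$-th summand and the only payoff-relevant private information for it is $(t_i^j,r_i^j)$, so the maximisation decomposes over the evaluatees $j$ and it suffices to analyse one $\mathbb{R}(i,j)$. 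By the large-population assumption, in the limit agent $i$'s own report has negligible influence on $\bar{x}_k$ and $\bar{y}_k$, so $\mathbb{R}(i,j)$ splits into an information score depending only on $x_i^j$ and a prediction score depending only on $y_i^j$, which may be optimised independently. Finally, again by the large-population assumption together with the law of large numbers, when $j$'s peers are truthful the empirical frequencies converge, giving $\bar{x}_k\to(1-\epsilon)\omega_j^k+\frac{\epsilon}{M}$ and $\ln\bar{y}_k\to\sum_{v=1}^{M}\omega_j^v\ln\bigl((1-\epsilon)r^k(v)+\frac{\epsilon}{M}\bigr)$, where $r^k(v)=\mathbb{E}[\omega_j^k|t_i^j=v]$.

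For the prediction score, let $\bar{r}_k=\mathbb{E}[\bar{x}_k|t_i^j]=(1-\epsilon)r_i^{j^k}+\frac{\epsilon}{M}$. The expected prediction score is $\sum_k\bar{r}_k\ln\bigl((1-\epsilon)y_i^{j^k}+\frac{\epsilon}{M}\bigr)$ up to a term not involving $y_i^j$. Writing $u_k=(1-\epsilon)y_i^{j^k}+\frac{\epsilon}{M}$, which ranges over $\{u:\sum_k u_k=1,\ u_k\ge\frac{\epsilon}{M}\}$, Gibbs' inequality shows $\sum_k\bar{r}_k\ln u_k$ is maximised precisely at $u=\bar{r}$, i.e. at $y_i^{j^k}=r_i^{j^k}$, and strict concavity makes this maximiser unique; hence truthful prediction reporting is a strict best response, and the recalibration term causes no difficulty here.

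For the information score, when $x_i^j=m$ the limiting score is $\ln\bar{x}_m-\ln\bar{y}_m$, so $i$ chooses $m\in\{1,\dots,M\}$ to maximise
\[
IS(m|t)=\mathbb{E}_{\omega_j|t}\bigl[\ln\bigl((1-\epsilon)\omega_j^m+\tfrac{\epsilon}{M}\bigr)\bigr]-\sum_{v=1}^{M}r^v(t)\,\ln\bigl((1-\epsilon)r^m(v)+\tfrac{\epsilon}{M}\bigr),
\]
with $t=t_i^j$. Taking $\epsilon=0$ first, $IS(m|t)=\mathbb{E}_{\omega_j|t}[\ln\omega_j^m]-\sum_v r^v(t)\ln r^m(v)$, and using $r^m(v)=P(m|v)=P(v|m)P(m)/P(v)$ (with $P(\cdot)$ the marginal over a signal value) one can rearrange this into
\[
IS(t|t)-IS(m|t)=\mathbb{E}_{\omega_j|t}\bigl[\ln\tfrac{\omega_j^t}{\omega_j^m}\bigr]-\ln\tfrac{P(t)}{P(m)}-D\bigl(P(\cdot|t)\,\|\,P(\cdot|m)\bigr),
\]
where $D$ denotes relative entropy between the two predictive distributions. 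The divergence is zero iff $P(\cdot|t)=P(\cdot|m)$, which by stochastic relevance happens iff $m=t$, and the goal is to show the whole right-hand side is nonnegative and strictly positive whenever $m\neq t$. The obvious Jensen bounds on the first two terms are not enough to dominate $D$, so this is the one genuinely delicate step; it is Prelec's information inequality, which rests on the Bayesian fact that an agent's own signal is its own best predictor of a peer's signal, and I would reproduce (or simply cite) that argument. For $\epsilon>0$ I would treat $\epsilon=0$ as the base case and note that the $m$-dependent part of $IS(m|t)$ is continuous in $\epsilon$ while the base-case gap is strict, so $IS(t|t)>IS(m|t)$ survives for $\epsilon$ small enough --- consistent with the paper's remark that incentive distortions vanish as $\epsilon\to 0$.

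Putting the two scores together, when all peers report truthfully agent $i$'s expected share is strictly maximised by reporting $(\mathbf{t}_i,\mathbf{r}_i)$, which is the assertion. The main obstacle, as flagged, is the information-score inequality: collapsing $IS(t|t)-IS(m|t)$ into a manifestly nonnegative form that is strict for $m\neq t$ is where the full Bayesian structure is needed, whereas the prediction-score argument and the $\epsilon\to 0$ continuity check are routine.
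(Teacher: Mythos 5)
The paper does not actually prove this theorem: it imports it wholesale from Prelec via the citation to~\cite{serum}, with only the informal remark that the $\epsilon$-recalibration distorts incentives negligibly for small $\epsilon$. Your reconstruction follows exactly the route that citation stands for, so in substance you are on the same path as the paper, just more explicit: the decomposition over evaluatees via Assumption 6, the large-population limit making $\bar{x}_k$ and $\bar{y}_k$ insensitive to $i$'s own report so that information and prediction scores can be optimized separately, and the Gibbs-inequality argument for the prediction score are all correct (and your observation that the recalibrated predictions $u_k=(1-\epsilon)y_i^{j^k}+\frac{\epsilon}{M}$ range over a restricted simplex containing the unconstrained maximizer $\bar{r}$ is a nice way to see that the $\epsilon$-term is harmless there).

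Two caveats, both of which mirror rather than exceed what the paper itself leaves unproved. First, the crux --- showing $IS(t|t)-IS(m|t)>0$ for $m\neq t$, i.e.\ that $\mathbb{E}_{\omega_j|t}[\ln(\omega_j^t/\omega_j^m)]-\ln(P(t)/P(m))$ dominates the divergence $D(P(\cdot|t)\,\|\,P(\cdot|m))$ --- is deferred to Prelec; as a standalone proof this is the missing step, but since the paper's ``proof'' is precisely that citation, you have not omitted anything the paper supplies. Second, your $\epsilon>0$ treatment establishes strictness only for $\epsilon$ sufficiently small (by continuity from the $\epsilon=0$ case), so the theorem is not verified for an arbitrary fixed $\epsilon\in(0,1)$; again this matches the paper's own hedge (``any distortion in incentives can be made arbitrarily small by making $\epsilon$ sufficiently small''), but it is worth stating the strict-equilibrium claim with that qualifier, and noting that continuity at $\epsilon=0$ needs the observation that $\mathbb{E}_{\omega_j|t}[\ln\omega_j^t]$ is finite (the posterior reweighting by $\omega_j^t$ keeps $\omega\ln\omega$ bounded), while an infinite gap for a lie only helps.
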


\begin{theorem}
The BTS method is zero-sum.
\end{theorem}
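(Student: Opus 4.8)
The plan is to check the claim one ``question'' at a time: fix an agent $j$ being evaluated, treat the remaining agents $i \in N \setminus \{j\}$ as the responders to the question ``what is the evaluation deserved by agent $j$?'', and show that $\sum_{i \neq j} \mathbb{R}(i,j) = 0$; summing this identity over all $j \in N$ then gives that the BTS scores sum to zero over the whole population. First I would split $\sum_{i \neq j}\mathbb{R}(i,j)$, by linearity, into a total information score and a total prediction score. The crucial structural observation is that the quantities $\bar{x}_k$ and $\bar{y}_k$ appearing in Equation~\ref{BTS} depend only on the profile of reports about $j$, not on the particular responder $i$, so they factor out of the sum over $i$.

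For the information component I would interchange the order of summation and collect the indicators: $\sum_{i \neq j} h(x_i^j,k)$ is $(n-1)$ times the empirical frequency of evaluation $k$ for agent $j$, and inverting the definition $\bar{x}_k = (1-\epsilon)\left(\frac{1}{n-1}\sum_{q \neq j} h(x_q^j,k)\right) + \frac{\epsilon}{M}$ gives $\sum_{i \neq j} h(x_i^j,k) = \frac{n-1}{1-\epsilon}\left(\bar{x}_k - \frac{\epsilon}{M}\right)$, which up to the recalibration offset is just $(n-1)\,\bar{x}_k$. Hence the total information score is (essentially) $(n-1)\sum_{k=1}^{M} \bar{x}_k \ln \frac{\bar{x}_k}{\bar{y}_k}$. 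For the prediction component I would again swap summation and now use the fact that $\bar{y}_k$ is a geometric mean: $\ln \bar{y}_k = \frac{1}{n-1}\sum_{q \neq j}\ln\left((1-\epsilon)y_q^{j^k} + \frac{\epsilon}{M}\right)$, so $\sum_{i \neq j}\ln\left((1-\epsilon)y_i^{j^k} + \frac{\epsilon}{M}\right) = (n-1)\ln \bar{y}_k$. The total prediction score therefore collapses to $(n-1)\sum_{k=1}^{M} \bar{x}_k\left(\ln \bar{y}_k - \ln \bar{x}_k\right) = -(n-1)\sum_{k=1}^{M}\bar{x}_k \ln\frac{\bar{x}_k}{\bar{y}_k}$, exactly cancelling the total information score. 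Adding the two gives $\sum_{i \neq j}\mathbb{R}(i,j) = 0$, and summing over $j$ finishes the argument.

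The conceptual content is thus immediate --- the information score and the prediction score are ``dual'' through the common pair $(\bar{x}_k,\bar{y}_k)$ and annihilate each other upon aggregation --- and the only delicate point is the $\epsilon$-bookkeeping glossed over above. With the recalibrated rule the information aggregate equals $(n-1)\,\bar{x}_k$ only up to the $\frac{\epsilon}{M}$ offset (and the geometric-mean vector $\bar{y}$ need not normalize to one), so the cancellation leaves a residual of order $\epsilon$. I would handle this either by invoking the remark made after Equation~\ref{BTS} that any such distortion can be made arbitrarily small by taking $\epsilon$ small, or by stating the exact zero-sum identity for the original $\epsilon = 0$ scoring rule of Prelec~\cite{serum} and noting that the perturbation does not change the conclusion in the regime of interest. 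That bookkeeping is the only step requiring care; everything else is a two-line interchange of summations.
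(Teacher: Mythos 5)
Your argument is correct, and there is in fact no in-paper proof to compare it against: the paper imports Theorems 1 and 2 from Prelec~\cite{serum} without proof, so what you give is the standard direct verification, and it is sound. Fixing $j$ and exchanging the order of summation, the prediction-score aggregate collapses \emph{exactly}, since $\sum_{i\neq j}\ln\bigl((1-\epsilon)y_i^{j^k}+\frac{\epsilon}{M}\bigr)=(n-1)\ln\bar{y}_k$ by the very definition of the geometric mean (so your side worry about $\bar{y}$ not normalizing to one is immaterial), giving $-(n-1)\sum_k\bar{x}_k\ln(\bar{x}_k/\bar{y}_k)$. The information-score aggregate, however, equals $(n-1)\sum_k f_k\ln(\bar{x}_k/\bar{y}_k)$ with $f_k=\frac{1}{n-1}\sum_{q\neq j}h(x_q^j,k)$ the raw empirical frequency, and since $\bar{x}_k=(1-\epsilon)f_k+\frac{\epsilon}{M}$ the two cancel exactly only at $\epsilon=0$: for the recalibrated rule of Equation~\ref{BTS} one gets $\sum_{i\neq j}\mathbb{R}(i,j)=(n-1)\,\epsilon\sum_{k}\bigl(f_k-\frac{1}{M}\bigr)\ln(\bar{x}_k/\bar{y}_k)$, which is $O(\epsilon)$ but generically nonzero. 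So the $\epsilon$-bookkeeping you flag is precisely the delicate point, and your proposed resolution (state the exact zero-sum identity for Prelec's $\epsilon=0$ rule and treat $\epsilon>0$ as an arbitrarily small perturbation) is the right one; it is actually more careful than the paper itself, whose proof of Proposition 1 (budget balance) invokes $\sum_{i\neq j}\mathbb{R}(i,j)=0$ exactly for the $\epsilon>0$ rule and is therefore, strictly speaking, exact only at $\epsilon=0$, holding otherwise up to an $O(\alpha\epsilon)$ residual.
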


Theorem 1 means that the strict best response of an agent, in an expected sense, when everyone else is telling the truth is also to tell the truth. Theorem 2 means that the sum of the scores received by the agents is equal to zero, \textit{i.e.}, $\sum_{i\neq j}\mathbb{R}(i,j) = 0$. In what follows, we provide bounds for the scores returned by the BTS method.

\begin{lemma}
$\forall i \neq j,\, \mathbb{R}(i,j) \in \left[-2\ln(\frac{M}{\epsilon}),\, \ln(\frac{M}{\epsilon})\right]$.
\end{lemma}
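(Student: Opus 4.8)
The plan is to bound the two summands of Equation~\ref{BTS} separately, using the fact that all the quantities $\bar{x}_k$, $\bar{y}_k$, and $(1-\epsilon)y_i^{j^k}+\frac{\epsilon}{M}$ lie in the interval $[\frac{\epsilon}{M},\,1]$. This range is immediate from the recalibration: each of these expressions is a convex combination of a number in $[0,1]$ with $\frac{1}{M}$, scaled so that the $\frac{\epsilon}{M}$ floor is always present, and none can exceed $1$ since the probabilities involved sum to $1$.

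First I would handle the information score $\sum_{k=1}^M h(x_i^j,k)\ln\frac{\bar{x}_k}{\bar{y}_k}$. Since $h(x_i^j,\cdot)$ is an indicator that is $1$ for exactly one value of $k$ and $0$ elsewhere, this entire sum equals $\ln\frac{\bar{x}_{k^*}}{\bar{y}_{k^*}}$ for the single $k^*$ with $x_i^j=k^*$. Using $\bar{x}_{k^*}\le 1$ and $\bar{y}_{k^*}\ge\frac{\epsilon}{M}$ gives an upper bound of $\ln\frac{M}{\epsilon}$; using $\bar{x}_{k^*}\ge\frac{\epsilon}{M}$ and $\bar{y}_{k^*}\le 1$ gives a lower bound of $-\ln\frac{M}{\epsilon}$. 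So the information score lies in $[-\ln\frac{M}{\epsilon},\,\ln\frac{M}{\epsilon}]$.

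Next I would handle the prediction score $\sum_{k=1}^M \bar{x}_k \ln\frac{(1-\epsilon)y_i^{j^k}+\frac{\epsilon}{M}}{\bar{x}_k}$. The key observation is that $\{\bar{x}_k\}_{k=1}^M$ is very nearly a probability vector — in fact $\sum_k \bar{x}_k = (1-\epsilon)+\epsilon = 1$ — so the sum is (the negative of) a cross-entropy-like quantity. For the upper bound, write $\ln\frac{(1-\epsilon)y_i^{j^k}+\frac{\epsilon}{M}}{\bar{x}_k} = \ln((1-\epsilon)y_i^{j^k}+\frac{\epsilon}{M}) - \ln\bar{x}_k$; the first term is $\le \ln 1 = 0$ and the second contributes $-\sum_k \bar{x}_k\ln\bar{x}_k$, which is the entropy of the distribution $\{\bar{x}_k\}$ and is therefore at most $\ln M \le \ln\frac{M}{\epsilon}$. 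Hence the prediction score is at most $\ln\frac{M}{\epsilon}$ — actually at most $\ln M$, but the coarser bound suffices and keeps the statement clean. For the lower bound, since each ratio argument is at least $\frac{\epsilon/M}{1}=\frac{\epsilon}{M}$, each logarithm is at least $-\ln\frac{M}{\epsilon}$, and $\sum_k\bar{x}_k=1$, so the prediction score is at least $-\ln\frac{M}{\epsilon}$. Combining: the prediction score also lies in $[-\ln\frac{M}{\epsilon},\,\ln\frac{M}{\epsilon}]$.

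Adding the two ranges termwise yields $\mathbb{R}(i,j)\in[-2\ln\frac{M}{\epsilon},\,2\ln\frac{M}{\epsilon}]$, which already gives a symmetric bound; to recover the asymmetric $[-2\ln\frac{M}{\epsilon},\,\ln\frac{M}{\epsilon}]$ stated in the lemma I need the sharper fact that the prediction score is actually at most $0$ (equivalently, $-$cross-entropy, which is nonpositive only after accounting for the entropy term — so this needs a small argument). The cleanest route is to note $\sum_k \bar{x}_k \ln\frac{(1-\epsilon)y_i^{j^k}+\frac{\epsilon}{M}}{\bar{x}_k} \le \ln\sum_k \bar{x}_k\cdot\frac{(1-\epsilon)y_i^{j^k}+\frac{\epsilon}{M}}{\bar{x}_k} = \ln\sum_k\big((1-\epsilon)y_i^{j^k}+\frac{\epsilon}{M}\big) = \ln 1 = 0$ by Jensen's inequality applied to the concave function $\ln$ with weights $\bar{x}_k$ (valid since $\sum_k\bar{x}_k=1$). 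This is the one genuinely non-mechanical step, and I expect it to be the main obstacle — everything else is just plugging the interval $[\frac{\epsilon}{M},1]$ into monotonicity of $\ln$. With the prediction score pinned in $[-\ln\frac{M}{\epsilon},\,0]$ and the information score in $[-\ln\frac{M}{\epsilon},\,\ln\frac{M}{\epsilon}]$, summing gives exactly $\mathbb{R}(i,j)\in[-2\ln\frac{M}{\epsilon},\,\ln\frac{M}{\epsilon}]$, completing the proof.
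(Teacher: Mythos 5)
Your proposal is correct and follows essentially the same route as the paper: bound the information score and the prediction score separately using $\frac{\epsilon}{M} \leq \bar{x}_k, \bar{y}_k \leq 1$, and show the prediction score is nonpositive to get the asymmetric upper bound. The only difference is cosmetic — the paper obtains the nonpositivity of the prediction score by citing that it is the negative of a Kullback--Leibler divergence, whereas you re-derive that fact via Jensen's inequality (correctly using $\sum_k \bar{x}_k = 1$ and $\sum_k \bigl((1-\epsilon)y_i^{j^k} + \frac{\epsilon}{M}\bigr) = 1$), which just makes the argument self-contained.
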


\begin{proof}
We start by noting that:

\begin{displaymath}
0 < \frac{\epsilon}{M} \leq  \bar{x}_k,  \bar{y}_k  \leq 1-\epsilon + \frac{\epsilon}{M} < 1.
\end{displaymath}

Focusing first on the lower-bound, we analyze each part of Equation~\ref{BTS} separately. Starting with the information score, we have:

\begin{eqnarray}
\label{bound-bts-1}
\sum_{k = 1}^M h(x_i^{j}, k) \ln \frac{\bar{x}_k}{\bar{y}_k} 
&\geq& 
\sum_{k = 1}^M h(x_i^{j}, k) \ln \bar{x}_k \\ \nonumber
&\geq& 
\ln\frac{\epsilon}{M},  \nonumber
\end{eqnarray}

\noindent where the inequalities follow, respectively, from the facts that $0 < \bar{y}_k < 1$, and $\frac{\epsilon}{M} \leq  \bar{x}_k  < 1$. Moving to the prediction score, we have:

\begin{eqnarray}
\label{bound-bts-2}
\sum_{k = 1}^M \bar{x}_k \ln \frac{ (1-\epsilon)y_i^{j^k} + \frac{\epsilon}{M}}  {\bar{x}_k} 
&\geq& 
\sum_{k = 1}^M \bar{x}_k \ln \frac{\epsilon}{M} \\\nonumber
&=& 
\left(1 - \epsilon + \frac{\epsilon}{M}\right) \ln\frac{\epsilon}{M} \\\nonumber
&\geq& \ln\frac{\epsilon}{M},
\end{eqnarray}

\noindent where the first inequality follows from the facts that $0 < \bar{x}_k < 1$ and $(1-\epsilon)y_i^{j^k} \geq 0$. The second inequality follows from the facts that $\ln(\epsilon/M) < 0$, and $0 < \left(1 - \epsilon + \frac{\epsilon}{M}\right) < 1$. Joining (\ref{bound-bts-1}) and (\ref{bound-bts-2}), we have:

\begin{eqnarray*}
\mathbb{R}(i,j) &\geq& 2\ln\frac{\epsilon}{M} \\
&=& -2\ln\frac{M}{\epsilon}.
\end{eqnarray*}

Focusing now on the upper-bound of Equation~\ref{BTS}, we start by analyzing the information score:

\begin{eqnarray}
\sum_{k = 1}^M h(x_i^{j}, k) \ln \frac{\bar{x}_k}{\bar{y}_k} 
&\leq&
\sum_{k = 1}^M h(x_i^{j}, k) \ln \frac{1}{\bar{y}_k} \\ \nonumber
&\leq& 
\ln\frac{1}{\frac{\epsilon}{ M}} \\ \nonumber
&=&
\ln\frac{M}{\epsilon}, \\ \nonumber
\end{eqnarray}

The inequalities follow from the fact that $\frac{\epsilon}{M} \leq \bar{x}_k,\bar{y}_k < 1$. Moving to the prediction score, we note that its value is always less than or equal to zero, because it can be seen as the negative of the Kullback-Leibler divergence, which is always greater than or equal to zero~\cite{information_theory}. Thus, we have:

\begin{displaymath}
\mathbb{R}(i,j) \leq \ln\frac{M}{\epsilon}.
\end{displaymath}
\end{proof}

\section{The Mechanism}

In this section, we propose a mechanism for sharing rewards based on subjective opinions. It starts by requesting both evaluations and predictions from the agents. For each vector with evaluations, $\mathbf{x}_i$, the mechanism creates another vector, $\mathbf{\chi}_i = (\chi_i^1,\dots, \chi_i^{i-1}, \chi_i^{i+1}, \dots,\chi_i^n)$, by scaling the elements of $\mathbf{x}_i$ so that they sum up to $V$. Mathematically,

\begin{equation}
\label{scaled-eval}
\forall i,j,\, \chi_i^j = x_i^j\left(\frac{V}{\sum_{q\neq i} x_i^q}\right).
\end{equation}

This simple pre-processing step ensures that the sum of the resulting shares is not orders of magnitude lower than the reward $V$. The share received by each agent $i \in N$ from the mechanism has two major components. The first one, $\bar{\chi}^i$, reflects agent $i$'s received evaluations. It is calculated by summing the scaled evaluations received by agent $i$, and dividing the sum by $n$, \textit{i.e.},

\begin{equation}
\label{scaled-eval2}
\bar{\chi}^i = \frac{\sum_{j\neq i} \chi_j^i}{n}.
\end{equation}

This simple idea of aggregating the scaled evaluations for an agent by summing them and dividing by $n$ helps to ensure important properties for the mechanism. The second component of agent $i$'s share is a truth-telling score. The intuition behind such scores is that agents who believe that the others are telling the truth maximize their expected scores by also telling the truth. The score of agent $i$, $\zeta_i$, is calculated as follows:

\begin{equation}
\label{truth-telling-scores}
\zeta_i = \frac{\sum_{j\neq i} \mathbb{R}(i,j)}{n-1},
\end{equation}

\noindent where $\mathbb{R}(i,j)$ is defined in Equation~\ref{BTS}. Agent $i$'s score is then the arithmetic mean of results returned by the Bayesian truth serum method, where each result is directly related to an evaluation and a prediction reported by agent $i$. Finally, the share of agent $i$ is a linear combination of $\bar{\chi}^i$ and $\zeta_i$, \textit{i.e.},

\begin{equation}
\label{share-bts}
\Gamma_i  = \bar{\chi}^i + \alpha\, \zeta_i,
\end{equation}

\noindent where the constant $\alpha$, for $\alpha > 0$, fine-tunes the weight given to the truth-telling score $\zeta_i$. Its value has an important role in ensuring desirable properties for the mechanism.

The intuition behind the proposed mechanism is that agents who believe that the others are truthfully reporting have their expected shares maximized to the extent that they are well-evaluated and that they are also telling the truth. It is interesting to note that despite the assumptions of prior and posterior distributions, they are neither known nor requested by the mechanism, only evaluations and predictions are elicited from agents.

\subsection{Numerical Example}

A numerical example may clarify the mechanics of the proposed mechanism. Consider six agents indexed by the letters $A, B, C, D, E, F,$ a joint reward $V = 1000$, and assume that $M = 2$. The reported predictions and evaluations can be seen, respectively, in Table~\ref{tab:pred} and Table~\ref{tab:eval}.

\begin{table*}
\centering
\caption{\label{tab:pred} Numerical example: reported predictions.}
\begin{tabular}{|c|c|c|c|c|c|c|c|c|c|c|c|c|} \hline
& \multicolumn{2}{|c|}{\textbf{A}}
& \multicolumn{2}{|c|}{\textbf{B}} 
& \multicolumn{2}{|c|}{\textbf{C}} 
& \multicolumn{2}{|c|}{\textbf{D}} 
& \multicolumn{2}{|c|}{\textbf{E}} 
& \multicolumn{2}{|c|}{\textbf{F}}\\ \hline 
  & \textbf{``1"} & \textbf{``2"} & \textbf{``1"} & \textbf{``2"} & \textbf{``1"} & \textbf{``2"} & \textbf{``1"} & \textbf{``2"} & \textbf{``1"} & \textbf{``2"} & \textbf{``1"} & \textbf{``2"} \\ \hline  
\textbf{A} & - & - & 0 & 1 & 0.4 & 0.6 & 0.2 & 0.8 & 1 & 0 & 0.2 & 0.8 \\ \hline 
\textbf{B} & \emph{0.8} & 0.2 & - & - & 0.2 & 0.8 & 0.2 & 0.8 & 1 & 0 & 0.4 & 0.6  \\ \hline 
\textbf{C} & 0.8 & 0.2 & 0 & 1 & - & - & 0.4 & 0.6 & 1 & 0 & 0.4 & 0.6 \\ \hline 
\textbf{D} & 0.8 & 0.2 & 0.2 & 0.8 & 0.6 & 0.4 & - & - & 0.8 & 0.2 & 0.4 & 0.6 \\ \hline 
\textbf{E} & 0.8 & 0.2 & 0 & 1 & 0.6 & 0.4 & 0.4 & 0.6 & - & - & 0.4 & 0.6 \\ \hline 
\textbf{F} & 0.8 & 0.2 & 0.8 & 0.2 & 0.6 & 0.4 & 0.4 & 0.6 & 0.8 & 0.2 & - & - \\ \hline 
\end{tabular}
\end{table*}

In Table~\ref{tab:pred}, each numeric cell can be interpreted as the prediction made by the agent in the row about the percentage of agents that give the evaluation in the second row of the cell's column (\textbf{``1"} or \textbf{``2"}) to the agent in the first row of the cell's column. For example, the emphasized number $0.8$ means that agent $B$ predicts that 80\% of the population gives the evaluation $1$ to agent $A$.

In Table~\ref{tab:eval}, each numeric cell can be interpreted as the evaluation given by the agent in the row to the agent in the column. For example, the emphasized number $2$ represents $x_A^B$, \textit{i.e.}, the evaluation given by agent $A$ to agent $B$.

Using these evaluations and predictions, and the parameters $\alpha = 100$ and $\epsilon = 0.01$, the  mechanism returns the shares shown in the last column of Table~\ref{tab:shares}. The major components of these shares are shown in the first columns. For illustration's sake, consider the share received by agent $F$. To compute the first component of $\Gamma_F$, the mechanism aggregates the scaled evaluations received by agent $F$ (Equation~\ref{scaled-eval2}):

\begin{eqnarray*}
\bar{\chi}^F &=& \frac{142.86 + 250.00 + 285.71 + 250.00 + 222.22}{6} \\
			    &\approx& 191.80.
\end{eqnarray*}

The second component of $\Gamma_F$ is the arithmetic mean of results returned by the BTS method, where each result is directly related to an evaluation and a prediction submitted by agent $F$ (Equation~\ref{truth-telling-scores}):

\begin{eqnarray*}
\zeta_F &=& \frac{\mathbb{R}(F,A) + \mathbb{R}(F,B) + \mathbb{R}(F,C) + \mathbb{R}(F,D) + \mathbb{R}(F,E)}{5} \\
	 &\approx& \frac{0.58 -1.19 -0.18 -0.11 -0.11}{5}\\
			    &\approx& -0.21.
\end{eqnarray*}

Finally, the share received by agent $F$ from the mechanism is a linear combination of $\bar{\chi}^F$ and $\zeta_F$:

\begin{eqnarray*}
\Gamma_F &=& \bar{\chi}^F + \alpha\,\zeta_F \\
	       &=& 191.80 + 100\times(-0.21) \\
			 &=& 170.80.
\end{eqnarray*}

\begin{table}[h]
\centering
\caption{\label{tab:eval} Numerical example: reported evaluations.}
\begin{tabular}{|c|c|c|c|c|c|c|} \hline
  & \textbf{A} & \textbf{B} & \textbf{C} & \textbf{D} & \textbf{E} & \textbf{F} \\\hline
\textbf{A} & - & \emph{2} & 2 & 1 & 1 & 1 \\ \hline
\textbf{B} & 1 & - & 2 & 2 & 1 & 2 \\ \hline
\textbf{C} & 1 & 2 & - & 1 & 1 & 2 \\ \hline
\textbf{D} & 1 & 2 & 2 & - & 1 & 2 \\ \hline
\textbf{E} & 2 & 2 & 1 & 2 & - & 2 \\ \hline
\textbf{F} & 2 & 2 & 1 & 2 & 1 & - \\ \hline
\end{tabular}
\end{table}

\begin{table}[h]
\centering
\caption{\label{tab:shares} Numerical example: resulting shares.}
\begin{tabular}{|c|c|c|c|} \hline
& $\bar{\chi}^i$ & $\zeta_i$ & $\Gamma_i$ \\ \hline
\textbf{A} & 144.18 &  0.05  & 149.18     \\ \hline
\textbf{B} & 215.61 & -0.06  & 209.61     \\ \hline
\textbf{C} & 170.30 &  0.09  & 179.30     \\ \hline
\textbf{D} & 167.99 & -0.02  & 165.99     \\ \hline
\textbf{E} & 110.12 &  0.15  & 125.12     \\ \hline
\textbf{F} & 191.80 & -0.21  & 170.80     \\ \hline
\end{tabular}
\end{table}

\subsection{Properties}

In this subsection, we show that the proposed mechanism satisfies important properties.

\begin{proposition}
The mechanism is budget-balanced.
\end{proposition}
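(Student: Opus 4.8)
The plan is to show that $\sum_{i=1}^n \Gamma_i(\mathbf{s}) = V$ for an arbitrary strategy profile $\mathbf{s} \in S$, by summing Equation~\ref{share-bts} over all agents and treating the two components separately. Since $\Gamma_i = \bar{\chi}^i + \alpha\,\zeta_i$, linearity of summation gives $\sum_{i=1}^n \Gamma_i = \sum_{i=1}^n \bar{\chi}^i + \alpha \sum_{i=1}^n \zeta_i$, so it suffices to prove that $\sum_{i=1}^n \bar{\chi}^i = V$ and $\sum_{i=1}^n \zeta_i = 0$.

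For the first sum, I would substitute the definition in Equation~\ref{scaled-eval2} to get $\sum_{i=1}^n \bar{\chi}^i = \frac{1}{n}\sum_{i=1}^n \sum_{j\neq i} \chi_j^i$. Reindexing the double sum (swapping which index is the evaluator), this equals $\frac{1}{n}\sum_{j=1}^n \sum_{i\neq j} \chi_j^i$, i.e. $\frac{1}{n}\sum_{j=1}^n \big(\sum_{i\neq j}\chi_j^i\big)$. Now the key point is the normalization built into Equation~\ref{scaled-eval}: for each fixed agent $j$, $\sum_{i\neq j}\chi_j^i = \sum_{i\neq j} x_j^i \cdot \frac{V}{\sum_{q\neq j} x_j^q} = V$, since the numerator and the normalizing denominator are the same sum. (Here I am implicitly assuming $\sum_{q\neq j} x_j^q > 0$, which holds because reported evaluations lie in $\{1,\dots,M\}$ and hence are at least $1$.) Therefore $\sum_{i=1}^n \bar{\chi}^i = \frac{1}{n}\sum_{j=1}^n V = \frac{nV}{n} = V$.

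For the second sum, I would use the definition in Equation~\ref{truth-telling-scores} to write $\sum_{i=1}^n \zeta_i = \frac{1}{n-1}\sum_{i=1}^n\sum_{j\neq i}\mathbb{R}(i,j) = \frac{1}{n-1}\sum_{i\neq j}\mathbb{R}(i,j)$, and then invoke Theorem~2 (the BTS method is zero-sum), which states exactly that $\sum_{i\neq j}\mathbb{R}(i,j) = 0$. Hence $\alpha\sum_{i=1}^n\zeta_i = 0$, and combining the two pieces yields $\sum_{i=1}^n\Gamma_i(\mathbf{s}) = V$ for every $\mathbf{s}\in S$, which is precisely budget balance.

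This argument is essentially a bookkeeping exercise, so there is no deep obstacle; the only subtle points are (i) being careful with the double-sum reindexing so that the normalization in Equation~\ref{scaled-eval} can be applied per-agent, and (ii) noting that Theorem~2 is stated for the original BTS with $\epsilon = 0$, so one should either observe that the zero-sum property is unaffected by the recalibration coefficient or cite it as holding for Equation~\ref{BTS} as used here. I would add a one-line remark addressing that. The most "dangerous" step to get wrong is simply the index swap in $\sum_i\sum_{j\neq i}\chi_j^i$, but once phrased as "for each agent $j$, total scaled evaluation handed out about $j$ equals $V$," the result is immediate.
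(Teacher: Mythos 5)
Your argument is correct and follows essentially the same route as the paper's own proof: split $\sum_i \Gamma_i$ into the evaluation and score components, reindex the double sum so that the per-agent normalization of Equation~\ref{scaled-eval} gives $\sum_{i\neq j}\chi_j^i = V$ for each $j$, and invoke Theorem~2 to kill the $\zeta$ term. Your side remark about $\epsilon$ is reasonable but unnecessary here, since the paper explicitly asserts Theorem~2 for the recalibrated score of Equation~\ref{BTS} and its proof cites it in exactly the same way you do.
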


\begin{proof}

The sum of the shares received by the agents is equal to:

\begin{eqnarray*}
\sum_{ i = 1}^n \left( \bar{\chi}^i + \alpha\,\zeta_i \right) 
&=& \sum_{ i = 1}^n \bar{\chi}^i + \alpha\,\sum_{ i = 1}^n\zeta_i 
\\
&=&\sum_{ i = 1}^n \frac{\sum_{j\neq i} \chi_j^i}{n} +\alpha\,\sum_{ i = 1}^n\frac{\sum_{j\neq i} \mathbb{R}(i,j)}{n-1}
\\
&=& \sum_{ j = 1}^n\frac{\sum_{i\neq j} \chi_j^i}{n} + \alpha\,\sum_{ j = 1}^n\frac{\sum_{i\neq j} \mathbb{R}(i,j)}{n-1}
\\
&=& n\left(\frac{V}{n}\right) + \frac{\alpha}{n-1}\left(\sum_{ j = 1}^n \sum_{i\neq j} \mathbb{R}(i,j)\right).
\end{eqnarray*}

The last equality follows from the fact that the scaled evaluations sum up to $V$ (Equation~\ref{scaled-eval}). From Theorem 2, we know that $\sum_{ i \neq j } \mathbb{R}(i,j) = 0$, thus completing the proof.
\end{proof}

\begin{proposition}
The mechanism is incentive-compatible.
\end{proposition}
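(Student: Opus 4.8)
The plan is to exploit the additive structure $\Gamma_i = \bar{\chi}^i + \alpha\,\zeta_i$ and reduce the claim to Theorem~1. First I would observe that $\bar{\chi}^i = \frac{1}{n}\sum_{j \neq i}\chi_j^i$, and that each scaled evaluation $\chi_j^i$ is determined entirely by agent $j$'s reported evaluation vector (see Equation~\ref{scaled-eval}); hence $\bar{\chi}^i$ is a function of $\mathbf{s}_{-i}$ alone and is completely unaffected by agent $i$'s strategy. Consequently, for any fixed $\mathbf{s}_{-i}$, maximizing $\mathbb{E}[\Gamma_i(\mathbf{s}_i, \mathbf{s}_{-i}) \mid \mathbf{t}_i, \mathbf{r}_i]$ over $\mathbf{s}_i$ is equivalent, since $\alpha > 0$, to maximizing $\mathbb{E}[\zeta_i(\mathbf{s}_i, \mathbf{s}_{-i}) \mid \mathbf{t}_i, \mathbf{r}_i] = \frac{1}{n-1}\sum_{j \neq i}\mathbb{E}[\mathbb{R}(i,j) \mid \mathbf{t}_i, \mathbf{r}_i]$.

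Next I would argue that this maximization decomposes across the $n-1$ questions ``what evaluation does agent $j$ deserve?''. Fix the profile $\mathbf{s}_{-i}$ to be the truthful one. The only parts of $\mathbf{s}_i$ that enter $\mathbb{R}(i,j)$ are the report $x_i^j$ and the prediction $y_i^j$: they appear in agent $i$'s own information and prediction terms, and $x_i^j$ (resp. $y_i^j$) also appears inside the averages $\bar{x}_k$ (resp. $\bar{y}_k$) for question $j$ --- but by the large-population assumption a single evaluation cannot appreciably move those empirical quantities, so that contribution is negligible. Moreover, $(x_i^j, y_i^j)$ does not enter $\mathbb{R}(i,k)$ for any $k \neq j$. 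By the independent-signals assumption together with the per-target common prior, agent $i$'s posterior relevant to question $j$ is governed by $t_i^j$ only (and $r_i^j = \mathbb{E}[\omega_j \mid t_i^j]$ adds nothing new), so $\mathbb{E}[\mathbb{R}(i,j) \mid \mathbf{t}_i, \mathbf{r}_i] = \mathbb{E}[\mathbb{R}(i,j) \mid t_i^j]$. Therefore agent $i$'s problem splits into $n-1$ independent subproblems, one per peer $j$: choose $(x_i^j, y_i^j)$ to maximize $\mathbb{E}[\mathbb{R}(i,j) \mid t_i^j]$ given that every other responder on question $j$ reports truthfully.

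Each such subproblem is precisely the single-question BTS setting of Prelec, with $\mathbb{R}(i,j)$ the score of Equation~\ref{BTS}. By Theorem~1, the expected-score-maximizing response for agent $i$ when all other responders on that question are truthful is $x_i^j = t_i^j$ and $y_i^j = r_i^j$. Aggregating over $j$, the best response of agent $i$ to the truthful profile $\mathbf{s}_{-i}$ is its own truthful report $(\mathbf{t}_i, \mathbf{r}_i)$; since this holds for every $i$, collective truth-telling is a Bayes-Nash equilibrium, so the mechanism is incentive-compatible. (In fact, because Theorem~1 yields a strict equilibrium and $\alpha > 0$ while $\bar{\chi}^i$ is constant in $\mathbf{s}_i$, the equilibrium here is strict as well.)

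I expect the main obstacle to be rigorously controlling the cross-terms in the second step: one must verify that agent $i$'s own evaluation and prediction, which genuinely enter the population averages $\bar{x}_k$ and $\bar{y}_k$ for question $j$, can be neglected --- this is exactly where the large-population assumption does the heavy lifting --- and that the independence and common-prior assumptions really do make the posteriors factor so that Prelec's single-question result applies verbatim to each peer. Confirming that the argument is not circular (the other agents' truthful reports on question $j$ are informative about $\omega_j$ through i.i.d.\ sampling, which is what makes $t_i^j$ predictively useful) and checking that the $\epsilon$-recalibration only shrinks these effects are the remaining details.
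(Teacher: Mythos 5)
Your proposal is correct and follows essentially the same route as the paper's (sketched) proof: both observe that $\bar{\chi}^i$ is a positive constant from agent $i$'s viewpoint, use the independent-signals assumption to decompose the problem into $n-1$ per-peer BTS questions, and invoke Theorem~1 to conclude that truthful reporting strictly maximizes each expected score. Your write-up is in fact somewhat more careful than the paper's sketch, since you explicitly note that agent $i$'s report enters $\bar{x}_k$ and $\bar{y}_k$ and that this is absorbed by the large-population assumption underlying Theorem~1.
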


\begin{proof}[(Sketch)]

Due to space limitations, we only provide a sketch of the proof. Suppose that every peer of an agent $i \in N$ is truthfully reporting its opinions. We prove that the strict best response for agent $i$, in an expected sense, is also to tell the truth. We start by observing that the share received by agent $i$ (Equation~\ref{share-bts}) can be written as $c_1 + c_2\sum_{j \neq i} \mathbb{R}(i,j)$, where $c_1$ and $c_2$ are positive constants, from agent $i$'s point of view, because they do not depend on the opinions reported by agent $i$. Due to the assumption of independent signals (Assumption 6, Section 2), we can restrict ourselves to find the strategy of agent $i$ that maximizes $\mathbb{E}\left[c_1 + c_2\mathbb{R}(i,j)\right] = c_1 + c_2\mathbb{E}\left[\mathbb{R}(i,j)\right] $, which in turn is strictly maximized when agent $i$ tells the truth (Theorem 1). Thus, the mechanism is incentive-compatible.
\end{proof}

\begin{proposition}
The mechanism is tractable.
\end{proposition}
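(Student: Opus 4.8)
The plan is to show that every quantity appearing in the definition of $\Gamma_i$ can be evaluated with a polynomial number of arithmetic operations in $n$ and $M$, and that $n$ and $M$ are themselves polynomially bounded by the size of the reported strategy profile $\mathbf{s}$, which consists of $O(n^2)$ integer evaluations drawn from $\{1,\dots,M\}$ together with $O(n^2 M)$ real-valued prediction components. I would proceed through Equations~\ref{scaled-eval}--\ref{share-bts} in order, since each later quantity is assembled from earlier ones, and tally the cost at each stage.

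First, for the scaled evaluations of Equation~\ref{scaled-eval}: for each agent $i$ compute the denominator $\sum_{q\neq i} x_i^q$ once, an $O(n)$-term sum, and then obtain each $\chi_i^j$ by a single division; this is $O(n^2)$ operations over all ordered pairs. Next, each received-evaluation component $\bar{\chi}^i$ of Equation~\ref{scaled-eval2} is a sum of $n-1$ previously computed values divided by $n$, so all $n$ of them together cost $O(n^2)$. The step that needs a moment's thought is the truth-telling score $\zeta_i$ of Equation~\ref{truth-telling-scores}, which aggregates the terms $\mathbb{R}(i,j)$ of Equation~\ref{BTS}. Read literally, each $\mathbb{R}(i,j)$ re-derives the averages $\bar{x}_k$ and $\bar{y}_k$ — themselves $O(n)$-term sums — for each of the $M$ alternatives, giving a naive $O(n^3 M)$ bound over all pairs; this is already polynomial, so tractability follows even from the crude count. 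To be cleaner I would observe that for a fixed $j$ the vectors $(\bar{x}_1,\dots,\bar{x}_M)$ and $(\bar{y}_1,\dots,\bar{y}_M)$ depend only on the reports $\{(x_q^j, y_q^j) : q\neq j\}$ and not on $i$, so they can be precomputed once per $j$ in $O(nM)$ time, i.e.\ $O(n^2 M)$ in total; given these tables each $\mathbb{R}(i,j)$ is a sum of $2M$ logarithm/ratio evaluations, hence $O(M)$ work, so all $O(n^2)$ of them — and therefore every $\zeta_i$ — are obtained in $O(n^2 M)$ time. Finally, each share $\Gamma_i = \bar{\chi}^i + \alpha\,\zeta_i$ of Equation~\ref{share-bts} is one multiplication and one addition. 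Summing the contributions yields an $O(n^2 M)$ bound, which is polynomial in the input size, establishing the claim.

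Consequently I do not anticipate a genuine obstacle here: the only care required is the bookkeeping that avoids the spurious cubic factor by precomputing the per-$j$ average vectors, and even that optimization is not needed for the statement as phrased, since the term-by-term count is already polynomial. The proof is therefore essentially an accounting exercise over Equations~\ref{scaled-eval}--\ref{share-bts}, concluding that the mechanism computes the share vector in polynomial time and is thus tractable.
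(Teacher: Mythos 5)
Your proposal is correct and follows essentially the same accounting argument as the paper, which tallies the cost of Equations~\ref{scaled-eval}--\ref{share-bts} term by term and arrives at the same naive $O(n^3M)$ bound you note before your optimization. Your additional observation that precomputing the per-$j$ averages $\bar{x}_k,\bar{y}_k$ reduces the total to $O(n^2M)$ is a nice refinement, but it does not change the substance of the argument, since polynomial time is all the proposition requires.
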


\begin{proof}
The pre-processing step (Equation~\ref{scaled-eval}) is computed in $O(n^2)$. Thereafter, for each agent $i \in N$, Equation~\ref{scaled-eval2} is computed in $O(n)$, Equation~\ref{truth-telling-scores} is computed in $O(n^2M)$ (since Equation~\ref{BTS} can be computed in $O(nM)$), and Equation~\ref{share-bts} is computed in $O(1)$. Thus, the mechanism runs in $O(n^3M)$ time. 
\end{proof}

\begin{proposition}
If $M \leq \sqrt{n-2}$ and $\alpha \leq \frac{V}{3Mn^2\ln\left(\frac{M}{\epsilon}\right)}$, then the mechanism is fair.
\end{proposition}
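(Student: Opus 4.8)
The plan is to expand the share difference via Equation~\ref{share-bts} as $\Gamma_i - \Gamma_j = (\bar{\chi}^i - \bar{\chi}^j) + \alpha(\zeta_i - \zeta_j)$ and argue that the evaluation component is bounded below by a strictly positive quantity large enough to swamp any negative contribution the truth-telling component could make. Specifically, I will establish $\bar{\chi}^i - \bar{\chi}^j \geq \frac{V}{n(n-1)M}$ and the crude bound $\zeta_i - \zeta_j \geq -3\ln(M/\epsilon)$, and then verify that $\alpha \leq \frac{V}{3Mn^2\ln(M/\epsilon)}$ makes $\Gamma_i - \Gamma_j$ strictly positive.

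For the first bound, I use Equation~\ref{scaled-eval2} to write $\bar{\chi}^i - \bar{\chi}^j = \frac{1}{n}[\sum_{z \neq i,j}(\chi_z^i - \chi_z^j) + (\chi_j^i - \chi_i^j)]$. The key observation is that for any third party $z \neq i,j$ the scaled evaluations $\chi_z^i$ and $\chi_z^j$ are obtained (Equation~\ref{scaled-eval}) from the \emph{same} factor $V/\sum_{q \neq z}x_z^q$, which lies between $\frac{V}{(n-1)M}$ and $\frac{V}{n-1}$ since each reported evaluation is an integer in $\{1,\dots,M\}$; as $x_z^i > x_z^j$ forces $x_z^i - x_z^j \geq 1$, each of these $n-2$ differences is at least $\frac{V}{(n-1)M}$. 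For the mutual pair I bound separately, $\chi_j^i \geq \frac{V}{(n-1)M}$ (because $x_j^i \geq 1$) and $\chi_i^j \leq \frac{MV}{n-1}$ (because $x_i^j \leq M$ and its denominator is at least $n-1$), so that $\chi_j^i - \chi_i^j \geq \frac{V(1-M^2)}{(n-1)M}$. Adding the $n-2$ third-party contributions gives $\bar{\chi}^i - \bar{\chi}^j \geq \frac{V(n-1-M^2)}{n(n-1)M}$, and the hypothesis $M^2 \leq n-2$ makes the numerator at least $V$, yielding $\bar{\chi}^i - \bar{\chi}^j \geq \frac{V}{n(n-1)M}$.

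For the truth-telling component, note that $\zeta_i$ and $\zeta_j$ are arithmetic means over $n-1$ BTS scores (Equation~\ref{truth-telling-scores}), so Lemma~1 gives $\zeta_i \geq -2\ln(M/\epsilon)$ and $\zeta_j \leq \ln(M/\epsilon)$, hence $\zeta_i - \zeta_j \geq -3\ln(M/\epsilon)$. Combining the two bounds, $\Gamma_i - \Gamma_j \geq \frac{V}{n(n-1)M} - 3\alpha\ln(M/\epsilon)$; substituting $\alpha \leq \frac{V}{3Mn^2\ln(M/\epsilon)}$ yields $\Gamma_i - \Gamma_j \geq \frac{V}{n(n-1)M} - \frac{V}{Mn^2} = \frac{V}{Mn^2(n-1)} > 0$, which is exactly the strict inequality required by Definition~1.

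The step I expect to be the main obstacle is controlling the mutual pair $(i,j)$: unlike the third-party terms, $\chi_j^i$ and $\chi_i^j$ are normalized by different denominators ($\sum_{q\neq j}x_j^q$ versus $\sum_{q\neq i}x_i^q$), so $x_j^i > x_i^j$ does \emph{not} imply $\chi_j^i > \chi_i^j$; in the worst case this single term contributes roughly $-\frac{MV}{n-1}$. This is precisely why the condition $M \leq \sqrt{n-2}$ is needed --- it guarantees the $n-2$ third-party terms leave a surplus big enough to absorb this loss --- and one must also track strictness carefully so that the conclusion is $\Gamma_i(\mathbf{s}) > \Gamma_j(\mathbf{s})$ and not merely $\geq$.
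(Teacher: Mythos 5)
Your proposal is correct and follows essentially the same route as the paper: the same term-by-term lower bound on $\bar{\chi}^i - \bar{\chi}^j$ (third-party differences at least $\frac{V}{(n-1)M}$ each, the mutual pair bounded by $\frac{V(1-M^2)}{(n-1)M}$, with $M \leq \sqrt{n-2}$ absorbing the loss) combined with the $3\ln(M/\epsilon)$ spread from Lemma 1 on the score averages. The only difference is cosmetic and slightly to your credit: you keep the comparison additive ($\Gamma_i - \Gamma_j \geq \frac{V}{n(n-1)M} - 3\alpha\ln(M/\epsilon)$) rather than dividing by $\zeta_j - \zeta_i$ as the paper does, which sidesteps the sign issue in that division and makes the strictness of the final inequality explicit.
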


\begin{proof}

Consider a pair of agents $i, j \in N$ and a strategy profile $\mathbf{s} \in S$ where $x_j^i > x_i^j$ and, for every other agent $z \neq i,j$, $x_z^i > x_z^j$. For the mechanism to be considered fair, its resulting shares must satisfy the following inequality:

\begin{eqnarray}
\label{proof-fair}
\Gamma_i(\mathbf{s}) > \Gamma_j(\mathbf{s}) &\equiv&
\bar{\chi}^i + \alpha\, \zeta_i >
\bar{\chi}^j + \alpha\, \zeta_j \nonumber\\
&\equiv& \alpha < \frac{\bar{\chi}^i - \bar{\chi}^j}{\zeta_j - \zeta_i}. 
\end{eqnarray}

In what follows, we compute a lower-bound for the above fraction. Starting with the numerator, we have:

\begin{eqnarray*}
\bar{\chi}^i - \bar{\chi}^j 
&=&
\frac{\sum_{z \neq i,j}\left( x_z^i - x_z^j\right) \left(\frac{V}{\sum_{q\neq z} x_z^q}\right)+ x_j^i\left(\frac{V}{\sum_{q\neq j} x_j^q}\right)}{n} \\\nonumber
&&
- \frac{x_i^j\left(\frac{V}{\sum_{q\neq i} x_i^q}\right)}{n}\\\nonumber
&\geq&
\frac{V}{n}\left(\frac{n-2}{(n-1)M} + \frac{1}{(n-1)M} - \frac{M}{(n-1)}\right)\\\nonumber
&=&
\frac{V}{n}\left(\frac{n-2+1-M^2}{(n-1)M}\right)\\\nonumber
&\geq&
\frac{V}{n(n-1)M}\\\nonumber
&\geq&
\frac{V}{n^2M}.\nonumber
\end{eqnarray*}

The first inequality follows from the facts that for every agent $z \neq i,j$, $x_z^i > x_z^j$ and $\forall i,j,\, x_i^j \in \{1,\dots, M\}$. The second inequality follows from the assumption that $M \leq \sqrt{n-2}$. Focusing on the denominator of the fraction in~(\ref{proof-fair}), since $\forall q\in N, \zeta_q$ is the average of $n-1$ results from the BTS method, then the difference between $\zeta_j$ and $\zeta_i$ is always less than or equal to the difference between the highest and the lowest scores that can be returned by the BTS method (Equation~\ref{BTS}), which is equal to $3\ln\left(\frac{M}{\epsilon}\right)$ according to Lemma 1. Thus, we conclude that if:

\begin{displaymath}
\alpha \leq \frac{V}{3Mn^2\ln\left(\frac{M}{\epsilon}\right)},
\end{displaymath}

\noindent and $M \leq \sqrt{n-2}$, then the proposed mechanism is  fair.
\end{proof}

Intuitively, this proposition  means that the proposed mechanism can be made fair by reducing the influence of the truth-telling scores on agents' shares, so that these shares will depend almost entirely on the reported evaluations.

\begin{proposition}
If $\alpha \leq \frac{V}{2Mn\ln\left(\frac{M}{\epsilon}\right)}$, then the mechanism is individually rational.
\end{proposition}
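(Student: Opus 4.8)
The plan is to show that $\Gamma_i(\mathbf{s}) = \bar{\chi}^i + \alpha\,\zeta_i \geq 0$ for every agent $i$ and every strategy profile $\mathbf{s} \in S$, by bounding the two components separately: a lower bound on $\bar{\chi}^i$ and a lower bound on $\zeta_i$, and then choosing $\alpha$ small enough that the (negative) contribution of the truth-telling score cannot overwhelm the (nonnegative) evaluation component.

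First I would bound $\bar{\chi}^i$ from below. By Equation~\ref{scaled-eval}, each scaled evaluation $\chi_j^i = x_j^i\bigl(V/\sum_{q\neq j}x_j^q\bigr)$ is strictly positive, since $x_j^i \geq 1$ and the denominator is at most $(n-1)M$. Hence $\chi_j^i \geq V/((n-1)M)$, and summing the $n-1$ received evaluations and dividing by $n$ gives $\bar{\chi}^i \geq \frac{n-1}{n}\cdot\frac{V}{(n-1)M} = \frac{V}{nM}$. In particular $\bar{\chi}^i \geq 0$, but the sharper bound $V/(nM)$ is what is needed.

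Next I would bound $\zeta_i$ from below using Lemma~1. Since $\zeta_i$ is the arithmetic mean of the $n-1$ BTS scores $\mathbb{R}(i,j)$, and each such score is at least $-2\ln(M/\epsilon)$ by Lemma~1, we get $\zeta_i \geq -2\ln(M/\epsilon)$. Combining, $\Gamma_i \geq \frac{V}{nM} - 2\alpha\ln(M/\epsilon)$, which is nonnegative precisely when $\alpha \leq \frac{V}{2Mn\ln(M/\epsilon)}$, matching the hypothesis. I would also note that $\ln(M/\epsilon) > 0$ since $M \geq 1$ and $0 < \epsilon < 1$, so the bound is well-defined and the direction of the inequality is correct.

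The only subtlety — and the place I would be most careful — is confirming that the lower bound on $\bar{\chi}^i$ genuinely holds for \emph{all} strategy profiles, not just equilibrium or ``reasonable'' ones: the denominator $\sum_{q\neq j}x_j^q$ could in principle be large, but it is bounded above by $(n-1)M$ because every reported evaluation lies in $\{1,\dots,M\}$ (the strategy space is determined entirely by $M$), and it is bounded below away from zero because every reported evaluation is at least $1$; the pre-processing step therefore never produces degenerate shares. With that observation the two bounds combine directly and no further case analysis is needed.
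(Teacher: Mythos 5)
Your proof is correct and follows essentially the same route as the paper: lower-bound the evaluation component by $\frac{V}{nM}$ using $x_j^i \geq 1$ and $\sum_{q\neq j} x_j^q \leq (n-1)M$, lower-bound $\zeta_i$ by $-2\ln\left(\frac{M}{\epsilon}\right)$ via Lemma~1, and combine to obtain the stated condition on $\alpha$. The only cosmetic difference is that the paper first restricts to the case $\zeta_i < 0$ and compares $\frac{\bar{\chi}^i}{-\zeta_i}$ with $\alpha$, whereas you add the two bounds directly; both are equivalent.
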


\begin{proof}

We start the proof by observing that  $\forall i \in N, \,\bar{\chi}^i \geq 0$ (Equation~\ref{scaled-eval2}). Consequently, if agents' scores are positive, then their shares will also be positive. So, we restrict ourselves to the scenario where truth-telling scores are negative. Thus, for every agent $i \in N$, the following inequality must be true when $\zeta_i < 0$:

\begin{equation}
\label{ind-rat-proof}
\bar{\chi}^i + \alpha\, \zeta_i \geq 0 \equiv
\frac{\bar{\chi}^i}{-\zeta_i} \geq \alpha.
\end{equation}

In what follows, we compute a lower-bound for the fraction in~(\ref{ind-rat-proof}). Starting with the numerator, we have:

\begin{eqnarray*}
\bar{\chi}^i
&=&
\frac{\sum_{j\neq i} x_j^i\left(\frac{V}{\sum_{q\neq j} x_j^q}\right)}{n} \\ \nonumber
&\geq& 
\frac{\sum_{j\neq i} x_j^i\left(\frac{V}{M\,(n-1)}\right)}{n} \\ \nonumber
&\geq&
 \frac{V (n-1)}{M\,n(n-1)} \\\nonumber
\end{eqnarray*}

The inequalities follow from the fact $\forall i,j,\, x_i^j \in \{1, \dots, M\}$. Focusing on the denominator of the fraction in~(\ref{ind-rat-proof}), since $\zeta_i$ is the average of $n-1$ results from the BTS method, we can restrict ourselves to find the lowest negative score that can be returned by the BTS method. From Lemma 1, we know that this value is $-2\ln\left(\frac{M}{\epsilon}\right)$. Thus, we conclude that if:

\begin{displaymath}
\alpha \leq \frac{V}{2Mn\ln\left(\frac{M}{\epsilon}\right)}, 
\end{displaymath}

\noindent then the proposed mechanism is individually rational.
\end{proof}

Since agents' scores can be negative, the above proposition means that the resulting shares can always be positive, regardless the reported evaluations and predictions, if we reduce the influence of these scores on agents' shares.

\section{Empirical Evaluation}

In this section, we report an empirical investigation of the influence of the model and mechanism's parameters on agents' shares. In all experiments reported here, agents' truthful evaluations are drawn from the probability distribution of the random variable $\mathbb{H} = \lceil M \mathbb{B} \rceil$, where $\mathbb{B}$ is Beta-distributed with parameters $\alpha = \beta = 0.5$, \textit{i.e.}, $\mathbb{B}$ has a symmetric, U-shaped distribution. For creating a random prediction, we use the empirical distribution of $n-1$ evaluations drawn from the probability distribution of $\mathbb{H}$. Thus, the experiments reflect scenarios where most of the agents have extreme opinions. Lastly, agents always report their opinions truthfully.

\subsection{Parameter $M$}

The parameter $M$ defines the range of possible evaluations that an agent can give or receive. To better understand the influence of different values of $M$ on agents' shares, we performed the following experiment. We shared the reward $V = 1000$ among  $100$ agents using the proposed mechanism and the following values for $M$: $2, 5, 7, 10, 25, 50, 75, 100$. We used the parameters $\alpha = 10$ and $\epsilon = 10^{-4}$, and we observed the mean and the standard deviation of the resulting shares for different values of $M$. Figure 1 shows the results.

\begin{figure}
\centering
\includegraphics[width=\linewidth]{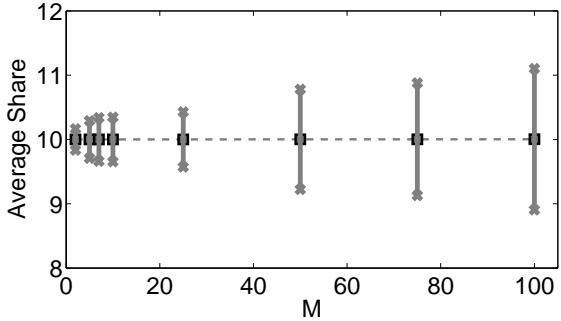}
\caption{Results of the experiment with different values for $M$. Average shares are represented by black squares, and standard deviations by gray lines. The dotted line is used to facilitate visualization.}
\vskip -10pt
\end{figure}

As can be seen in Figure 1, as $M$ increases, the standard deviation of the resulting shares also increases. Intuitively, this happens because the reported evaluations become more fine-grained, in that small differences between agents are recognized and specified by their peers, thus resulting in more diverse shares. It is important to note that this increased expressivity may be burdensome for the agents since they will have more possibilities to evaluate their peers, thus making the evaluation process more challenging. We argue that the underlying application may help to determine appropriate settings for $M$. Since the mechanism is budget-balanced and we used a fairly large population in this experiment, the average share stayed constant for different values of $M$.

\subsection{Parameter $\alpha$}

The parameter $\alpha$ of the proposed mechanism fine-tunes the weight given to the truth-telling scores. To better understand its influence on agents' shares, we performed the following experiment. We shared the reward $V = 1000$ among $100$ agents using the parameters $M = 10$, $\epsilon = 10^{-4}$, and $\alpha \in \{0.1, 1, 5, 10, 25, 50, 100, 500\}$. We ran this experiment $100$ times. We observed the total number of unfair shares and the total number of negative shares returned by the mechanism for different values of $\alpha$. An agent's share is considered unfair if that agent unanimously receives better evaluations than a peer, but its share is smaller than the peer's share. Thus, a mechanism is fair if it does not return unfair shares (see Definition 3). To compute the number of unfair shares, we made a pairwise comparison in each simulation step in which each returned share was compared to each other for determining whether the former was unfair or not. Table~\ref{tab:exp2} presents the results of this experiment.

\begin{table}[b]
\vskip -10pt
\centering
\caption{\label{tab:exp2} Results of the experiment with different values for $\alpha$.}
\begin{tabular}{|c|c|c|} \hline
$\alpha$ & \textbf{Unfair shares} & \textbf{Negative shares} \\\hline
0.1 &  0 & 0\\\hline
1   &  0 & 0\\\hline
5   &  0 & 0\\\hline
10  &  0 & 0\\\hline
25  &  0 & 0\\\hline
50  &  0 & 0\\\hline
100 &  0 & 8\\\hline
500 &  0 & 2543\\\hline
\end{tabular}
\end{table}

According to Proposition 4 and 5, we need to set $\alpha < 2.9\times 10^{-4}$ to mathematically ensure that the mechanism will be fair, and $\alpha < 0.044$ to mathematically ensure that the returned shares will always be greater than or equal to zero. From Table~\ref{tab:exp2}, we note that even with much higher values for $\alpha$, the mechanism did not return a single unfair share in this experiment. Further, the mechanism did not return a single negative share for $\alpha \leq 50$. This discrepancy between experiment and theory can be ascribed to the fact that the bounds for $\alpha$ are calculated based on worst-case scenarios, which are very unlikely to happen in practical applications. This implies that it is possible to promote truthfulness by using high values for $\alpha$ and still be able to obtain individual rationality and fairness.

\subsection{Parameter $n$}

The most stringent assumption made in this work is that the population of agents is large. This assumption is necessary for the proposed mechanism to be able to use the BTS method. We performed an experiment to investigate how this mechanism behaves when dealing with populations of different sizes. In detail, we studied how the size of the population affects the budget of the mechanism. We shared the reward $V = 1000$ using the parameters $M = 10$, $\alpha = 10$, $\epsilon = 10^{-4}$, and $n \in \{5, 10, 25, 50, 100, 150\}$. We executed the experiment 100 times. At the end of each simulation step, we computed the sum of the returned shares for each value of $n$. At the end of the experiment, we computed the averages and the standard deviations of these sums. Figure 2 shows the results.

\begin{figure}
\centering
\includegraphics[width=\linewidth]{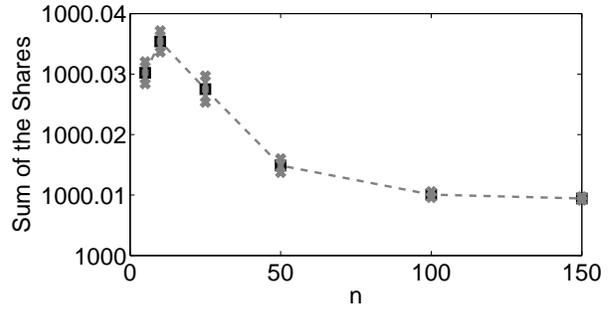}
\caption{Results of the experiment with different values for $n$. Black squares represent averages of the sum of the shares, and gray lines represent standard deviations. The dotted line is used to facilitate visualization.}
\vskip -10pt
\end{figure}

As can be seen in Figure 2, the mechanism loses more when $n \leq M$. Intuitively, since there are few agents to endorse a larger number of possible evaluations, the reported evaluations are very often surprisingly common. This implies higher truth-telling scores for the agents and, consequently, greater shares. Alternatively, agents' scores are more balanced when $n > M$. Since there are more agents than evaluations to be endorsed, the reported evaluations are not very often surprisingly common. Consequently, the average truth-telling score is not so high, and the mechanism's loss gradually decreases. An ANOVA test confirms that $n$ does indeed influence the resulting shares ($\rho < 0.0001$). The standard deviation of the sum of the shares also decreases when $n$ increases, thus supporting our claim that the scores are more balanced.

In conclusion, we note that a possible way to circumvent the assumption of a large population is to reduce the number of possible evaluations, \textit{i.e.}, to reduce the value of the parameter $M$. In this way, the influence of a single agent on the empirical distributions of evaluations may be reduced since these distributions will probably (but not necessarily) be more balanced. We suggest that a good rule of thumb is to use a value for the parameter $M \leq \sqrt{n-2}$, because at this point the number of different evaluations seems to be sufficiently smaller than the number of agents. Also, a value for $M$ satisfying this inequality helps to mathematically ensure fairness (Proposition 4). This rule has a strong empirical support in our experiment because the loss taken by  the mechanism is negligible when the inequality is satisfied, \textit{i.e.}, for $n = 100$ and $n=150$.

\section{Related Work}

Fair division has long been studied in cooperative game theory. The \emph{Shapley value}~\cite{osborne-rubinstein} is a key concept used in this field to distribute a joint surplus (or cost) among a set of agents. Roughly speaking, the Shapley value assigns a share to each agent equal to that agent's marginal contribution to the group. We note that sharing schemes based on marginal contributions, like the Shapley value, are not appropriate in our setting. The idea of marginal contribution is not objectively defined in our model because individual contributions are subjective information.

In the context of cooperative learning, Oakley \textit{et al.}~\cite{effective-teams} propose some guidelines to the effective design and management of teams of students. Slightly different from our model, each team member receives a common grade as the result of a joint academic work. These grades are adjusted through peer ratings (evaluations) to account for individual performance. In detail, a team grade is weighted by the average evaluation that a student receives to determine his or her final grade. A total of $9$ verbal evaluations are used, which are later converted to values inside the set $\{0, 12.5, 25, ...100\}$. Differently from our work, this rating scheme allows agents to make self-evaluations. Further, it does not promote truthfulness. Kaufman \textit{et al.}~\cite{bias} discuss the problems that may arise when using this rating system, $e.g.$, inflated self-evalua-tions and gender and racial bias. We believe that these problems may be even worse in our scenario because there is a joint reward to be shared, and not a common team grade.

Hence, the BTS method is an important component of our mechanism. We note that similar incentive-compatible methods which would require less information from the agents (\textit{i.e.}, only evaluations) could have been used (\textit{e.g.},~\cite{peer-prediction, Jurca2008a}).  However, most of these methods are not budget-balanced, which we believe is an important property in our setting.

\section{Conclusion}

In this paper, we proposed a game-theoretic model for sharing a joint, homogeneous reward based on the idea of subjective opinions. Each agent is asked to evaluate its peers as well as to predict how they will be evaluated. We introduced a mechanism to aggregate and use such opinions for determining agents' shares. The intuition behind the proposed mechanism is that each agent who believes that the others are telling the truth has its expected share maximized to the extent that it is well-evaluated and that it is truthfully reporting its opinions. Under the assumptions that agents are Bayesian decision-makers and that the underlying population of agents is sufficiently large, we showed that the proposed mechanism is incentive-compatible, budget-balanced, and tractable. We also presented strategies to make this mechanism individually rational and fair.

We implicitly assumed that agents are not participating in collusive agreements. However, there are many reasons why an agent may lie to benefit a peer. For example, in exchange for misreporting its evaluation, which may lead to a lower share for itself, a liar agent may receive a side-payment from the agent who benefits from the misreporting. Thus, an exciting direction for future research work is to study which kinds of collusive behavior may arise and how to avoid them.

\bibliographystyle{abbrv}
\bibliography{sigproc}  


\end{document}